\documentclass[12pt]{article}

\usepackage{fullpage}
\usepackage{amsmath, amsthm, amssymb}
\usepackage{verbatim}
\usepackage[T1]{fontenc}
\usepackage[tt=false,type1=true]{libertine}
\usepackage[varqu]{zi4}
\usepackage[libertine]{newtxmath}
\usepackage[sort]{natbib}
\usepackage{xcolor}
\usepackage[pagebackref=true,colorlinks]{hyperref}
\hypersetup{linkcolor=blue,filecolor=blue,citecolor=blue,urlcolor=blue}
\usepackage{graphicx}
\usepackage{authblk}

\newtheorem{lemma}{Lemma}
\newtheorem{theorem}{Theorem}

\theoremstyle{definition}

\newcommand{\btheta}{{\boldsymbol{\theta}}}
\newcommand{\poa}{\mathsf{PoA}}
\newcommand{\argmax}{\operatorname*{argmax}}
\newcommand{\rw}{\mathsf{rw}}
\newcommand{\agg}{\mathrm{agg}}
\newcommand{\con}{\mathrm{con}}

\title{Autobidding Auctions in the Presence of User Costs}
\date{}

\author[1]{Yuan Deng}
\author[1]{Jieming Mao}
\author[1]{Vahab Mirrokni}
\author[2]{Hanrui Zhang}
\author[1]{Song Zuo}
\affil[1]{Google, \texttt{\{dengyuan,maojm,mirrokni,szuo\}@google.com}}
\affil[2]{Carnegie Mellon University, \texttt{hanruiz1@cs.cmu.edu}}

\begin{document}

\maketitle

\begin{abstract}
We study autobidding ad auctions with user costs, where each bidder is value-maximizing subject to a return-over-investment (ROI) constraint, and the seller aims to maximize the social welfare taking into consideration the user's cost of viewing an ad. We show that in the worst case, the approximation ratio of social welfare by running the vanilla VCG auctions with user costs could as bad as $0$. To improve the performance of VCG, We propose a new variant of VCG based on properly chosen cost multipliers, and prove that there exist auction-dependent and bidder-dependent cost multipliers that guarantee approximation ratios of $1/2$ and $1/4$ respectively in terms of the social welfare. 
\end{abstract}

\section{Introduction}

As 2022, over 90\% of online advertising transact through automation technologies~\citep{programmatic22}. As one key application of automation in online advertising, auto-bidding significantly simplifies the tasks of advertisers by delegating the bidding to automated agents who submit real-time bids on behalf of the advertisers in each ad auction to optimize with pre-specified high-level objectives and constraints.
One popular and representative example of auto-bidding strategies is to maximize the total advertiser value subject to a target return-on-investment (ROI) constraint declared by the advertiser. This type of auto-bidders is referred as {\em value maximizers} by a recent line of research~\citep{deng2021towards,balseiro2021landscape} as these auto-bidders aim to maximize the total advertiser value from ad auctions, such as the (weighted) number of clicks or conversions, while guaranteeing the ROI no less than the advertiser-specified threshold.

The differences in behavior models between the classic (quasi-linear) {\em utility maximizers} (who optimize for value minus payment) and value maximizers results in different conclusions for many basic problems in auction theory. As one demonstrating example, Vickrey-Clarke-Groves (VCG) auction is known to achieve the optimal social welfare when bidders are utility maximizers. However, \citet{aggarwal2019autobidding} showed that in the worst case, the welfare of VCG auction can be as bad as $1/2$ of the optimal social welfare when bidders are value maximizers. Similar distinctions have been discovered by recent works revisiting the effectiveness of various existing mechanisms in the presence of value maximizers \citep{balseiro2021landscape,Mehta22,rFPApaper,deng2022efficiency}.

\paragraph{User Costs}
User costs to the auctioneer do not receive much attention in the literature on auction design for online advertising with utility maximizers, despite of the prevalence of the externalities on user experience of displaying ads to users~\citep{AbramsS07}. One reason is that when bidders are utility maximizers, many commonly studied auctions and their properties directly generalize from models without user costs to models with user costs. For example, VCG auction by definition directly applies to the model with user costs and continues to achieve the optimal social welfare. \citet{AbramsS07} generalize General-Second-Price (GSP) auction to accommodate the cost from user experience impact in online advertising by treating the original bids minus the costs as the effective bids from advertisers. They further prove that GSP preserves the property that there exists an equilibrium achieving the optimal social welfare after the generalization. 

\paragraph{Challenges from User Cost in Autobidding.} However, such nice generalization of concepts and results for user costs no longer works when the bidders are value maximizers. As we will show later in Section~\ref{sec:global}, VCG auction can have $0$ approximation to the optimal social welfare in the worst case when bidders are value maximizers. One intuitive explanation of the worsened approximation guarantee of VCG auction in this case lies behind the so-called {\em uniform-bidding} strategy widely adopted by auto-bidding agents~\citep{feldman2007budget,bateni2014multiplicative}. Under uniform-bidding strategies, each bidder bids its value of each auction times the {\em bid multiplier} that is constant across all auctions for this bidder. From the perspective of value maximizers, the best response is always achievable by uniform-bidding with a properly optimized bid multiplier under VCG auction with/without user costs \citep{aggarwal2019autobidding,deng2021towards}. 
In contrast, from the perspective of the auctioneer, bidders participating in the same auction could have different bid multipliers, meaning that their base values are scaled up by different magnitudes. Thus, the ranking of their bids can be very different from the efficient ranking, leading to an inefficient allocation. The inefficiency becomes worse off in the presence of user costs, since the cost associated to each bidder is not scaled by the corresponding bid multiplier in VCG auction. For bidders with bid multipliers much larger than $1$, the effect of their costs in ranking is then vanished, resulting in additional allocation inefficiency.

Based on the above observation, one na\"{i}ve and minimal fix to VCG auction is to artificially scale user costs by a multiplier and hope that VCG auction can retain a good approximation guarantee with a properly chosen {\em cost multiplier}. Unfortunately, using one global cost multiplier is not enough as we will show later in Section~\ref{sec:global}, the approximation guarantee for VCG auction with a global cost multiplier is still $0$ in the worst case.

\paragraph{Our Results} In this paper, we present how the constant approximation of social welfare can be retained with one step further on top of the above idea, namely, VCG auction with cost multipliers. In particular, we prove that there exists a set of cost multipliers corresponding to each auction (i.e., auction-dependent but bidder-independent cost multipliers) such that the welfare in any equilibrium among value maximizers under VCG auction with those cost multipliers is at least $1/2$ of the optimal social welfare (Section~\ref{sec:auction-dependent}). In addition, we further show that there exists a set of cost multipliers corresponding to each bidder (i.e., bidder-dependent but auction-independent cost multipliers) such that the welfare performance is at least $1/4$ of the optimal social welfare (Section~\ref{sec:bidder-dependent}).


\subsection{Related Work}

Since the seminal work of Vickrey–Clarke–Groves (VCG) auctions~\citep{vickrey1961counterspeculation,clarke1971multipart,groves1973incentives} and Myerson's auction~\citep{myerson1981optimal}, auction theory has found numerous practical applications. Examples includes combinatorial auctions for reallocating radio frequencies~\citep{cramton2006combinatorial}, generalized second-price auctions (GSP) and dynamic auctions for online advertising~\citep{aggarwal2006truthful,edelman2007internet,varian2007position,mirrokni2018dynamic,mirrokni2020non,balseiro2022dynamic}.
Most of the work in auction theory assume bidders are utility maximizers.

More recently, there's a line of research focusing on value maximizing bidders motivated by auto-bidding (e.g. target CPA bidders and target ROAS bidders) in online advertising. \citet{aggarwal2019autobidding} characterizes the optimal bidding strategies in truthful auction for value maximizing bidders and shows that in the worst case, the welfare of VCG auction can be as bad as $1/2$ of the optimal social welfare. \citet{deng2021towards,BalseiroDMMZ21, deng2022efficiency} show how boosts and reserves can be used to improve the welfare efficiency in auctions with value maximizing bidders. \citet{balseiro2021landscape, BalseiroDMMZ22} characterize the revenue-optimal single-stage auctions with value maximizers with or without budget constraints under various information structure. In addition to the above, there are other related work studying slightly different bidding models, e.g.  \citet{golrezaei2021auctions, goel2014clinching,goel2019pareto,babaioff2020non} study utility maximizers with ROI constraints, and \citet{fadaei2016truthfulness,wilkens2016mechanism,wilkens2017gsp} study value maximizing bidders with per-auction target constraints.

As far as we know, \citet{AbramsS07} is the first paper to study auction design with user costs for utility maximizing bidders, and \citet{LiLL13} extends the study to include position-specific information. User costs are motivated by studies on ``banner blindness'' \cite{Benway1998BannerBW, Cho04whydo, doi:10.1002/dir.10063} which show how user experience affects future user behavior with ads.

\section{Preliminaries}

\paragraph{Ad auctions with user costs.}
Following prior work on autobidding ad auctions~\citep{aggarwal2019autobidding,deng2021towards,BalseiroDMMZ21,Mehta22}, we consider the following multi-auction model.
There are $n$ bidders $[n] = \{1, 2, \dots, n\}$ and $m$ auctions $[m] = \{1, 2, \dots, m\}$.
We generally use $i$ to index bidders, and $j$ to index auctions.
In each auction $j$, each bidder $i$ has a {\em value} $v_{i, j}$, as well as a {\em user cost} $c_{i, j}$.
The user cost is suffered by the user who views the ad provided by bidder $i$ in auction $j$ when $i$ wins.
As such, it does not affect the behavior of the bidder directly, but counts negatively towards the social welfare.
We assume the value $v_{i, j}$ is private, but the user cost $c_{i, j}$ is public and can be directly observed by the auction mechanism.
In each auction $j$, each bidder $i$ submits a {\em bid} $b_{i, j}$, and all bidders' bids $\{b_{i', j}\}_{i'}$ and user costs $\{c_{i', j}\}_{i'}$ together determine each $i$'s {\em allocation} $x_{i, j}$ (i.e., $x_{i, j} = 1$ if $i$ wins in $j$, and $0$ otherwise) and {\em payment} $p_{i, j}$, in the way prescribed by the auction mechanism (we will discuss auction mechanisms momentarily).
We omit the dependence of $x_{i, j}$ and $p_{i, j}$ on $\{b_{i', j}\}_{i'}$ when it is clear from the context.

\paragraph{ROI-constrained value maximizers.}
We assume bidders are ROI-constrained value maximizers.
That is, each bidder $i$ maximizes the total value $i$ receives in all auctions, subject to the constraint that the ratio between this total value and the total payment $i$ makes in all auctions is at least some target quantity.
Following prior work~\citep{deng2021towards,BalseiroDMMZ21,deng2022efficiency}, without loss of generality, we assume this target ratio is $1$ for all bidders.
Formally, each bidder $i$ solves the following optimization problem to decide how to bid in all auctions:
\begin{align*}
    \max & \quad \sum_j x_{i, j} \cdot v_{i, j} \\
    \text{subject to} & \quad \sum_j (x_{i, j} \cdot v_{i, j} - p_{i, j}) \ge 0.
\end{align*}
We always assume there is a way of bidding such that the ROI constraint is satisfied (this is true for all quasilinear truthful auction mechanisms).

\paragraph{Quasilinear truthfulness and uniform bidding.}
In this paper, we focus on auction mechanisms that are quasilinear truthful.
An auction mechanism is quasilinear truthful, if each bidder would be motivated to bid their true value if they were to maximize their quasilinear utility.
In other words, for each $i$ and $j$, fixing any $v_{i, j}$ and $\{b_{i', j}\}_{i' \ne i}$, the following holds for any possible bid $b$:
\begin{align*}
    & x_{i, j}(b_{i, j} = v_{i, j}, \{b_{i', j}\}_{i' \ne i}) \cdot v_{i, j} - p_{i, j}(b_{i, j} = v_{i, j}, \{b_{i', j}\}_{i' \ne i}) \\
    \ge\ & x_{i, j}(b_{i, j} = b, \{b_{i', j}\}_{i' \ne i}) \cdot v_{i, j} - p_{i, j}(b_{i, j} = b, \{b_{i', j}\}_{i' \ne i}).
\end{align*}
One example of quasilinear truthful mechanisms is the {\em second-price auction}: in each auction $j$, the bidder $i$ with the highest bid wins, and the payment $i$ makes is the second highest bid.
In addition, quasilinear truthful mechanisms enjoy the following nice property: without loss of generality, the optimal bidding strategy of an ROI-constrained value maximizer is always {\em uniform bidding}~\citep{aggarwal2019autobidding}.
That is, for each bidder $i$, there exists a bid multiplier $\theta_i \ge 1$, such that fixing other bidders' bids, the optimal (i.e., value-maximizing subject to the ROI constraint) bidding strategy of $i$ is to bid $b_{i, j} = \theta_i \cdot v_{i, j}$ in each auction $j$.
Note that the bid multiplier is never smaller than $1$, which in particular means no bidder should ever bid below their true value.
In the rest of the paper, we only consider such bidding strategies, and sometimes use the corresponding bid multiplier to represent a bidding strategy.

\paragraph{Equilibria and the price of anarchy.}
To measure the efficiency of an auction mechanism, we consider the behavior of bidders in equilibrium under that mechanism.
We say the bidding strategies $\{\theta_i\}_i$ form an equilibrium, if each bidder $i$'s bidding strategy $\theta_i$ is a best response to all other bidders' strategies $\btheta_{-i} = \{\theta_{i'}\}_{i' \ne i}$. Moreover, we use the price of anarchy (PoA) to quantify the efficiency of a mechanism.
The PoA of a mechanism is the ratio between the worst-case welfare in equilibrium under that mechanism and the optimal welfare.
Formally, fixing a mechanism,
\[
    \poa = \inf_{\substack{n, m, \{v_{i, j}\}, \{c_{i, j}\} \\ \{\theta_i\}_i\text{ form an equilibrium}}} \frac{\sum_{i, j} x_{i, j} \cdot (v_{i, j} - c_{i, j})}{\sum_j \max\left\{0, \max_i (v_{i, j} - c_{i, j})\right\}}.
\]
There are $2$ things worth noting in the above definition:
\begin{itemize}
    \item If every bidder would contribute negatively to the social welfare if they were to win in a particular auction $j$ (i.e., $\max_i (v_{i, j} - c_{i, j}) < 0$), then auction $j$ contributes $0$ to the optimal welfare.
    \item In order for the PoA to be well defined, we always assume the optimal welfare is positive.
\end{itemize}

\section{Challenges Posed by User Costs}

To gain some intuition, we first briefly review existing approaches to autobidding auctions through quasilinear truthful mechanisms, particularly the second-price auction, and discuss why it fails in the presence of user costs.
In a second-price auction, the bidder with the highest bid wins, and the payment of the winner is the second highest bid.
It is known that the second-price auction has a PoA of $1/2$~\citep{aggarwal2019autobidding} when (1) bidders are ROI-constrained value maximizers and (2) there is no user cost.
Variants of the second-price auction has been analyzed in other cost-free settings, sometimes leading to improved efficiency guarantees (e.g., in settings with machine-learned advice~\citep{BalseiroDMMZ21}).
Nonetheless, in the core of these analyses lies the following conceptually simple argument:

\begin{theorem}[\citep{aggarwal2019autobidding,deng2021towards,BalseiroDMMZ21}] Suppose there is no user cost. The second-price auction achieves a PoA of $1/2$.
\end{theorem}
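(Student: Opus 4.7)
The plan is to compare the equilibrium welfare against the optimal welfare auction-by-auction, using two ingredients: (i) uniform bidding with multipliers $\theta_i \ge 1$, and (ii) the ROI constraints of the bidders summed globally. Fix an equilibrium $\{\theta_i\}_i$. For each auction $j$, let $i^*(j) \in \argmax_i v_{i,j}$ denote the welfare-optimal winner with optimal value $v_j^\star = v_{i^*(j), j}$, and let $w(j)$ denote the equilibrium winner. Partition the auctions into $A = \{j : w(j) = i^*(j)\}$ (matches) and $B = \{j : w(j) \ne i^*(j)\}$ (mismatches). The equilibrium welfare is $W = \sum_{j \in A} v_j^\star + \sum_{j \in B} v_{w(j), j}$, while the optimal welfare is $W^\star = \sum_{j \in A} v_j^\star + \sum_{j \in B} v_j^\star$.

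The first step is to lower bound the equilibrium payment in every mismatched auction. In $j \in B$, bidder $i^*(j)$ bids $\theta_{i^*(j)} \cdot v_j^\star \ge v_j^\star$ by uniform bidding with a multiplier at least $1$, and loses to $w(j)$. Hence in the second-price auction the payment made by the winner satisfies $p_{w(j), j} \ge \theta_{i^*(j)} \cdot v_j^\star \ge v_j^\star$. Summing gives $\sum_{j \in B} v_j^\star \le \sum_{j \in B} p_{w(j), j}$.

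The second step is to apply the ROI constraints. For each bidder $i$, the ROI constraint at equilibrium gives $\sum_{j : x_{i,j} = 1} p_{i, j} \le \sum_{j : x_{i,j} = 1} v_{i, j}$. Summing over all bidders, and using that each auction is won by at most one bidder, yields $\sum_j p_{w(j), j} \le \sum_j v_{w(j), j} = W$. In particular $\sum_{j \in B} p_{w(j), j} \le W$, so combined with the previous step we get $\sum_{j \in B} v_j^\star \le W$. Since also $\sum_{j \in A} v_j^\star \le W$ (the equilibrium actually collects $v_j^\star$ on those auctions), we obtain $W^\star \le 2W$, which is exactly the claimed PoA bound of $1/2$.

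There is no real technical obstacle here; the proof is essentially a two-line ``smoothness-style'' argument. The only subtlety worth flagging in the writeup is the justification that uniform bidding with $\theta_i \ge 1$ is without loss of generality for value-maximizing best responses in a quasilinear truthful auction, which is the property already recalled from \citet{aggarwal2019autobidding} in the preliminaries, and the fact that summing the per-bidder ROI inequalities does preserve correctness since each auction appears in at most one bidder's winning set.
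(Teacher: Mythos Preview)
Your proof is correct and takes essentially the same approach as the paper: both partition auctions into those where the rightful winner wins versus not, use uniform bidding with $\theta_i \ge 1$ to lower bound the loser's bid (hence the winner's payment) in mismatched auctions, and invoke the ROI constraints to convert total payment into a welfare bound. The only cosmetic difference is that the paper phrases the conclusion as ``the max of two quantities, each bounding welfare, is at least half the optimum,'' whereas you directly add the two inequalities to get $W^\star \le 2W$; these are equivalent.
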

\begin{proof}
    Suppose there is no user cost.
    For each auction $j$, consider the ``rightful winner'' $\rw(j)$ who has the highest value in $j$, i.e., $\rw(j) = \argmax_i v_{i, j}$ (ties are broken arbitrarily).
    Since the second-price auction is quasilinear truthful, in equilibrium, all bidders, including the rightful winner, must bid at least their own value, i.e., $b_{i, j} \ge v_{i, j}$ for all $i$ and $j$.
    Therefore, in each auction $j$, one of the following two cases must happen: (1) the rightful winner $\rw(j)$ wins, and collects value $v_{i, j}$ from auction $j$, or (2) the rightful winner $\rw(j)$ does not win, in which case the second highest bid is at least $v_{\rw(j), j} = \max_i v_{i, j}$, and therefore the winner must pay at least this amount in auction $j$.

    Summing over all auctions, the above reasoning implies that the sum of the following two quantities must be at least the optimal welfare: (1) the total value in all auctions collected by the respective rightful winners, and (2) the total payment made in all auctions by bidders other than the respective rightful winners.
    The first quantity clearly lower bounds the welfare in equilibrium.
    In fact, the second quantity is also a lower bound of the welfare: recall that the ROI constraints require that the total value each bidder receives should be at least the total payment that bidder makes, and summing over all bidders, one can see the second quantity also lower bounds the welfare in equilibrium.
    Now since the sum of the two quantities is at least the optimal welfare, at least one of the two must be no smaller than half of the optimal welfare, which means the welfare in equilibrium is no smaller than half of the optimal welfare.
    This gives a lower bound of $1/2$ on the PoA of the second-price auction.
\end{proof}

In words, the above argument lower bounds the total value collected in equilibrium and the total payment made in equilibrium, respectively, and use the larger one between the two as a lower bound of the welfare in equilibrium.
Given this argument, it might appear that the most natural mechanism in the presence of user costs is the second price auction with ``cost-adjusted bids''.
That is, bidders are sorted by the following adjusted bid: $b_{i, j} - c_{i, j}$ (negative adjusted bids are discarded); the bidder with the highest adjusted bid wins, and pays the minimum bid to win (by Myerson's characterization~\citep{myerson1981optimal}, this is the unique payment rule that guarantees quasilinear truthfulness).\footnote{
    To make this concrete, suppose there are 3 bidders in some auction $j$.
    The bids and costs are: $b_{1, j} = 5$, $b_{2, j} = 3$, $b_{3, j} = 4$, $c_{1, j} = 1$, $c_{2, j} = 2$, and $c_{3, j} = 1$.
    Then the cost-adjusted bids are $b_{1, j} - c_{1, j} = 4$, $b_{2, j} - c_{2, j} = 1$, and $b_{3, j} - c_{3, j} = 3$, so bidder $1$ wins.
    Bidder $3$ has the second highest cost-adjusted bid, so the minimum bid that bidder $1$ has to make in order to win is $b_{3, j} - c{3, j} + c{1, j} = 4$, so bidder $1$ pays $p_{1, j} = 4$.
}

Now let us try to generalize the argument above to this new mechanism.
For each auction $j$, the rightful winner $\rw(j)$ is the one with the largest cost-adjusted value, i.e., $\rw(j) = \argmax_i (v_{i, j} - c_{i, j})$.
Again there are two cases: (1) if the rightful winner $\rw(j)$ wins in auction $j$, then the welfare collected by $\rw(j)$ is $v_{\rw(j), j} - c_{\rw(j), j}$; (2) if the rightful winner does not win, then the second highest cost-adjusted bid must be at least $\max_i (v_{i, j} - c_{i, j})$, so payment made by the winner (say $i^*$) is at least $c_{i^*, j} + \max_i (v_{i, j} - c_{i, j})$.

Now one might be tempted to apply the following (faulty) reasoning and argue the PoA is again at least $1/2$: in case (1), the contribution of auction $j$ to the total value collected minus user cost incurred is $v_{\rw(j), j} - c_{\rw(j), j} = \max_i (v_{i, j} - c_{i, j})$, which is the same as the contribution of $j$ to the optimal welfare.
Similarly, in case (2), the contribution of $j$ to the total payment made minus user cost incurred is $\max_i (v_{i, j} - c_{i, j})$, which is again the same as the contribution of $j$ to the optimal welfare.
Now if we sum over all $j$, then we know that the sum of (1) the total value minus user cost and (2) the total payment minus user cost is at least the optimal welfare, so the larger one is at least half of the optimal welfare, which gives a PoA lower bound of $1/2$.

The problem of this reasoning is in the last step: when summing over $j$, the user cost is only counted once, so what we actually get is: the sum of the total value and the total payment, minus the total user cost incurred, is at least the optimal welfare.
This unfortunately does not give us any nontrivial bound on the welfare in equilibrium.

In fact, we will show in Section~\ref{sec:global} that the second-price auction with cost-adjusted bids has a PoA of $0$.
Indeed, to handle user costs we need fundamentally new ideas and techniques.

\section{Our Mechanisms}

We present two quasilinear truthful mechanisms with provable efficiency guarantees in the presence of user costs.
The first mechanism is anonymous, i.e., the outcome that any bidder receives is indepedent of that bidder's identity (or index).
The second mechanism is auction-oblivious, i.e., it works in the same way across all auctions, and may treat different bidders in different ways.

\subsection{Cost Multipliers}

We first introduce the key idea behind both of our mechanisms, the use of cost multipliers.
Instead of adjusting each bidder's bid by subtracting the user cost as is (as in the second-price auction with cost-adjusted bids discussed in the previous section), we apply a multiplier to the cost before performing the subtraction.
We illustrate the power of this idea in the single-bidder setting, which also serves as a warm-up for our subsequent discussion.

\paragraph{Cost multipliers in the single-bidder setting.}
Suppose $n = 1$.
For each auction $j$, let $v_j$, $c_j$, and $b_j$ be the only bidder's value, user cost, and bid in $j$, respectively. 
Let $S$ be the set of auctions where the bidder's value is no smaller than the respective cost, i.e., $S = \{j \mid v_j \ge c_j\}$.
Then, the optimal welfare is simply $\sum_{j \in S} (v_j - c_j)$.

Let $\alpha \ge 1$ be the unique number such that $\sum_{j \in S} v_j = \alpha \cdot \sum_{j \in S} c_j$.
Consider the following mechanism: in each auction $j$, the only bidder wins if the bid $b_j \ge \alpha \cdot c_j$, in which case the bidder pays the minimum bid to win, i.e., $p_j = \alpha \cdot c_j$; otherwise, no bidder wins and no payment happens.
Observe that the bidder's optimal bidding strategy in response to this mechanism is to set their bid multiplier to exactly $\alpha$, such that $b_j = \alpha \cdot v_j$ in each $j$.
This is because when the bid multiplier is $\alpha$, the bidder wins in all auctions in $S$ and no other auctions, so the total value collected by the bidder is $\sum_{j \in S} v_j$.
On the other hand, in each $j \in S$, the payment made by the bidder is $p_j = \alpha \cdot c_j$, so by the choice of $\alpha$, the total payment is $\sum_{j \in S} \alpha \cdot c_j = \sum_{j \in S} v_j$, which is the same as the total value.
In other words, when the bid multiplier is $\alpha$, the bidder's ROI constraint is binding.
This means there is no way to win in more auctions without violating the ROI constraint, because the marginal ROI ratio of winning in any other auction is strictly smaller than $1 / \alpha \le 1$.
So, the only equilibrium is fully efficient, and the PoA of this mechanism in the single-bidder setting is $1$.

\subsection{Mechanisms with Global Cost Multipliers}
\label{sec:global}
However, it turns out that it is not enough to use a {\em global} cost multiplier. In particular, mechanisms with global cost multipliers can get arbitrarily worse PoA.

\begin{theorem}
For any $\delta \in (0, 1/3)$, there exists an instance in which any mechanism with a global cost multiplier gets PoA $\leq 3\delta$.
\end{theorem}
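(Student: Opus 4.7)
The plan is to exhibit a single instance, with a small number of auctions and bidders depending on $\delta$, and to show that for every choice of global cost multiplier $\alpha$, some equilibrium in the resulting mechanism has welfare at most $3\delta$ times the optimum. Since $\poa$ is an infimum over equilibria, it suffices for each $\alpha$ to produce one bad equilibrium rather than to characterize every equilibrium.

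The instance I have in mind places (at least) two auctions operating at very different natural scales, so that the ``correct'' cost multiplier in the sense of the single-bidder warm-up --- where the right $\alpha$ is essentially the ratio of total value to total cost over profitable auctions --- differs by a factor on the order of $1/\delta$ across the two auctions. In each auction the rightful winner has cost close to value (net welfare a $\delta$ fraction of its value), and there is a bloated competitor whose value is nearly balanced by its cost. This scale mismatch is the source of the inefficiency: any single $\alpha$ is simultaneously too weak at one scale and too strong at the other.

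The argument splits into cases on $\alpha$. When $\alpha$ is below the smaller natural threshold, the cost penalty is too weak: a competitor with inflated value and matching cost can set a uniform multiplier slightly above $\alpha$ and bully the rightful winner out of one of the auctions, with its ROI constraint slack because the cost term is under-discounted. When $\alpha$ exceeds the larger natural threshold, the rightful winner of the thin-margin, high-cost auction cannot satisfy its own ROI constraint and drops out, and that auction is won by a near-zero-net bidder or goes empty. For $\alpha$ in the intermediate range, the uniform-bidding constraint couples the two auctions: each bidder's single $\theta_i$ must be ROI-feasible across both auctions at once, and the intermediate $\alpha$ is still either small enough to support bullying in the large-scale auction or large enough to squeeze out the rightful winner of the small-scale auction. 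In every case I would describe an explicit pure-strategy equilibrium, specified by each bidder's uniform bid multiplier, and directly compute the resulting welfare.

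The main obstacle is the intermediate-$\alpha$ regime, where neither extreme failure mode is visibly active. Its resolution leverages the fact that the two auctions' natural thresholds differ by a factor of $\Theta(1/\delta)$, so no $\alpha$ in the intermediate range can simultaneously avoid being ``too small'' for one scale and ``too large'' for the other; this forces a welfare-destroying allocation in at least one of the two auctions. Once the regimes are exhausted, summing the welfare contributions across the two auctions and comparing with the optimum (which by construction is balanced between the two scales) yields the claimed $\poa \leq 3\delta$ bound, with the constant $3$ coming from the slack in combining the bounds from the three regimes rather than from any three-way partition of the instance.
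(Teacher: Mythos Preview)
Your plan has a structural gap: a two-auction instance cannot force $\poa \le 3\delta$ for arbitrarily small $\delta$. Suppose OPT is split between two auctions with net contributions $W_1$ and $W_2$. Whatever ``bloated competitors'' you add, there will be some choice of $\alpha$ (namely one tuned to auction $k$) for which the rightful winner of auction $k$ can secure that auction in \emph{every} equilibrium --- the competitor, having value nearly equal to cost, is either priced out by the multiplier or cannot sustain an ROI-feasible bid high enough to block a best-responding rightful winner. That alone gives welfare $\ge W_k$, hence $\poa \ge \min(W_1,W_2)/(W_1+W_2)$, which is bounded below by a constant regardless of $\delta$ (and equals $1/2$ if you balance the two scales as you suggest). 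Your three-regime case split on $\alpha$ does not escape this: the ``intermediate'' regime is where one auction is handled well, and one well-handled auction already kills the bound. More generally, with $k$ scales you cannot do better than $\poa \ge 1/k$ by this kind of argument, so to reach $3\delta$ you need $k = \Theta(1/\delta)$ scales, not two.

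The paper's construction takes exactly this route but with two further differences from your sketch. First, it uses $n = \lfloor 1/\delta \rfloor$ \emph{independent single-bidder} sub-instances (bidder $i$ alone in auctions $2i-1$ and $2i$), with no competitors at all; the inefficiency comes not from ``bullying'' but from each bidder voluntarily winning a paired \emph{negative}-welfare auction ($v=\delta$, $c=1-\delta$) whenever $\alpha$ is small enough to make that ROI-feasible. Second, the good auctions are placed at geometrically separated thresholds $v/c = 1+\delta^i$, so that the windows $(1+\delta^{i+1},\,1+\delta^i]$ in which bidder $i$ can win only its good auction are pairwise disjoint. Any global $\alpha$ lies in at most one such window, so at most one bidder contributes $1$ and the remaining $n-1$ contribute at most $2\delta$ each, giving welfare $\le 1 + 2\delta(n-1) < 3\delta \cdot n = 3\delta \cdot \mathrm{OPT}$.
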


\begin{proof}
    Consider the instance with $n = \lfloor 1/\delta \rfloor$ and $m = 2n$. Bidder $i$ only has non-zero values and costs in auction $2i-1$ and $2i$. The values and costs of bidder $i$ are specified as the follow table:
    \begin{center}
\begin{tabular}{ |c|c|c| } 
 \hline
  & Auction $2i-1$ & Auction $2i$ \\ 
 \hline
 Values ($v_{i,2i-1}$, $v_{i,2i}$)& $\delta$ & $1 + 1/\delta^i$ \\ 
 \hline
 Costs ($c_{i,2i-1}$, $c_{i,2i}$)& $1-\delta$ & $1/\delta^i$ \\ 
 \hline
\end{tabular}
\end{center}

It is easy to check that the optimal welfare can be achieved by letting bidder $i$ only wins auction $2i$, i.e.
\[
\text{OPT} = \sum_j \max\left\{0, \max_i (v_{i, j} - c_{i, j})\right\} = n.
\]

Notice that 
\[
\frac{v_{i,2i}}{c_{i,2i}} = 1 + \delta^i,
\]
and
\[
\frac{v_{i,2i-1} + v_{i,2i}}{c_{i,2i-1} + c_{i,2i}} = 1 + \frac{2\delta}{1 - \delta + \delta^{-i}} > 1 + \frac{2\delta}{2\delta^{-i}} = 1 + \delta^{i+1}.
\]

For any mechanism with a global cost multiplier $\alpha$, if $\alpha > 1 + \delta^i$, bidder $i$ will win nothing and contribute 0 to the welfare. And if $\alpha \leq 1 + \delta^{i+1}$, bidder $i$ will win both auction $2i-1$ and auction $2i$, and the overall welfare is $\delta + 1 + 1/\delta^i - (1 - \delta + 1/\delta^i) = 2\delta$.

Therefore, for each bidder $i \in [n]$, its contribution to welfare is at most $2\delta$ if $\alpha \not\in (1+\delta^{i+1}, 1+\delta^i]$, and at most 1 otherwise. Note that these intervals $(1+\delta^{i+1}, 1+\delta^i]$ are disjoint, and thus the overall welfare of a mechanism with any global multiplier $\alpha$ is at most
\[
2\delta \cdot (n -1) + 1 < (2\delta + 1/n) \cdot n \leq 3\delta n = 3\delta \cdot \text{OPT}. \qedhere
\]
\end{proof}

Such an impossibility result implies that it is necessary to incorporate additional information into the design of cost multipliers to be effective. In Section~\ref{sec:auction-dependent} and Section~\ref{sec:bidder-dependent}, we show the existence of effective auction-dependent cost multipliers and bidder-dependent cost multipliers, respectively. The constructions of these cost multipliers in the existence proofs rely on knowing additional auction-specific and/or bidder-specific value information, which may not be accessible in a real system. One could consider applying -driven techniques to construct cost multipliers in practice, which is outside of the scope of this paper.

\subsection{Auction-Dependent Cost Multipliers}
\label{sec:auction-dependent}

In this section, we show that there exist auction-dependent (and bidder-independent) cost multipliers with a PoA of $1/2$.

\paragraph{The mechanism.}
In each auction $j$, if $\max_i (v_{i, j} - c_{i, j}) < 0$, then no bidder wins (we let $\rw(j) = 0$ in such cases).
Otherwise, let $\rw(j) = \argmax_i (v_{i, j} - c_{i, j})$ (breaking ties arbitrarily), and let $\alpha_j = (v_{\rw(j), j} - c_{\rw(j), j}) / (2 c_{\rw(j), j})$.
Sort all bidders by the following score: $b_{i, j} - (1 + \alpha_j) \cdot c_{i, j}$.
Let $i_1(j)$ and $i_2(j)$ be the bidders with the highest and second highest scores respectively (again breaking ties arbitrarily).
$i_1(j)$ wins (i.e., $x_{i_1(j), j} = 1$) if their score is at least $0$, and pays the minimum bid to win, i.e.,
\[
    p_{i_1(j), j} = \max\{b_{i_2(j), j} - (1 + \alpha_j) \cdot c_{i_2(j), j}, 0\} + (1 + \alpha_j) \cdot c_{i_1(j), j}.
\]
If $i_1(j)$'s score is smaller than $0$, then no bidder wins.

One may check that the above mechanism is in fact quasilinear truthful.
There is one ambiguity in the mechanism: for an auction $j$, if $c_{\rw(j), j} = 0$, then the mechanism sets $\alpha_j = \infty$, which may introduce undefined behavior.
Although this corner case is unlikely to be a problem in practice, for consistency in theory, we use the following convention: if $\alpha_j = \infty$, then for any bidder $i$, $\alpha_j \cdot c_{i, j} = \infty$ if $c_{i, j} > 0$, and $\alpha_j \cdot c_{i, j} = \frac12 v_{\rw(j), j}$ if $c_{i, j} = 0$.
One may check that with this convention, our analysis correctly handles infinite multipliers.

\paragraph{Analysis of the mechanism.}
The efficiency of the mechanism is captured by the following theorem.

\begin{theorem}
    The mechanism with auction-dependent cost multipliers has a PoA of $1/2$.
\end{theorem}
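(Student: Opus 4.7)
The plan is to fix an equilibrium $\{\theta_i\}_i$ and compare the mechanism's welfare $W$ to $\text{OPT}=\sum_j \text{OPT}_j$, where $\text{OPT}_j := \max\{0,\, v_{\rw(j),j}-c_{\rw(j),j}\}$. The algebraic centerpiece that makes everything work is the identity $(1+\alpha_j)\,c_{\rw(j),j} = \tfrac12(v_{\rw(j),j}+c_{\rw(j),j})$, forced by the choice of $\alpha_j$; it says the effective reserve against $\rw(j)$ sits exactly at the midpoint of its value and its cost, which is what produces the factor $1/2$. Auctions with $\text{OPT}_j=0$ contribute nothing on either side so we restrict to $j$ with $\text{OPT}_j>0$.

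First I would show that whenever $\text{OPT}_j>0$, some bidder wins in equilibrium. Since uniform bidding gives $\theta_i\ge1$ and hence $b_{\rw(j),j}\ge v_{\rw(j),j}$, the score of $\rw(j)$ satisfies
\[
 b_{\rw(j),j}-(1+\alpha_j)\,c_{\rw(j),j} \;\ge\; v_{\rw(j),j}-\tfrac12(v_{\rw(j),j}+c_{\rw(j),j}) \;=\; \tfrac12\,\text{OPT}_j \;\ge\; 0.
\]
So $i_1(j)$'s score is also nonnegative, and the mechanism allocates the auction.

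Next, the main step is a per-auction lower bound on payment minus incurred user cost. I would split cases. If $i_1(j)=\rw(j)$, the reserve term alone gives $p_{i_1(j),j}\ge(1+\alpha_j)\,c_{\rw(j),j}$, hence $p_{i_1(j),j}-c_{i_1(j),j}\ge\alpha_j\,c_{\rw(j),j}=\tfrac12\,\text{OPT}_j$. If $i_1(j)\ne\rw(j)$, then $i_2(j)$'s score is at least $\rw(j)$'s score, which by the previous display is $\ge\tfrac12\,\text{OPT}_j$, and the payment formula then gives
\[
 p_{i_1(j),j}-c_{i_1(j),j} \;\ge\; \tfrac12\,\text{OPT}_j+\alpha_j\,c_{i_1(j),j} \;\ge\; \tfrac12\,\text{OPT}_j.
\]
Summing over $j$ yields $\sum_j\bigl(p_{i_1(j),j}-c_{i_1(j),j}\bigr)\ge\tfrac12\,\text{OPT}$. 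Finally, aggregating the per-bidder ROI constraints across all bidders gives $\sum_{i,j}x_{i,j}v_{i,j}\ge\sum_{i,j}p_{i,j}$; subtracting $\sum_{i,j}x_{i,j}c_{i,j}$ from both sides yields $W\ge\sum_j\bigl(p_{i_1(j),j}-c_{i_1(j),j}\bigr)\ge\tfrac12\,\text{OPT}$.

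The hard part is really just spotting the right $\alpha_j$; after that, the argument is mostly bookkeeping. The one subtlety I would need to treat carefully is the corner case $c_{\rw(j),j}=0$ (where $\alpha_j=\infty$): under the stated convention, $(1+\alpha_j)c_{\rw(j),j}$ becomes $\tfrac12 v_{\rw(j),j}=\tfrac12\,\text{OPT}_j$, so the midpoint identity still holds and the case analysis above goes through verbatim. The other thing worth double-checking is that quasilinear truthfulness of the mechanism, together with the standard uniform-bidding reduction cited in the preliminaries, justifies restricting to bid multipliers $\theta_i\ge1$ used in the very first step.
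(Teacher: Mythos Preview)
Your proposal is correct and follows essentially the same approach as the paper: a per-auction lower bound on payment minus incurred cost of $\tfrac12\,\text{OPT}_j$, obtained by the same two-case split on whether $\rw(j)$ wins, then summed and combined with the aggregated ROI constraints. Your presentation is arguably slightly cleaner in that you explicitly isolate the midpoint identity $(1+\alpha_j)c_{\rw(j),j}=\tfrac12(v_{\rw(j),j}+c_{\rw(j),j})$ and explicitly verify that some bidder wins whenever $\text{OPT}_j>0$, whereas the paper leaves the latter implicit in its case split.
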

\begin{proof}
    First observe that by Myerson's characterization~\citep{myerson1981optimal}, the mechanism is quasilinear truthful, and so each bidder $i$'s bid in each auction $j$ is at least their value, i.e., $b_{i, j} \ge v_{i, j}$.
    Using this fact, we will lower bound the difference between the total payment made in all auctions and the total cost incurred in all auctions in equilibrium.
    That is, we lower bound the following quantity:
    \[
        \sum_{i, j} (p_{i, j} - x_{i, j} \cdot c_{i, j}).
    \]
    Due to the ROI constraints, this is a lower bound of the welfare in equilibrium.
    Fix any equilibrium.
    We compare the contribution of each auction to the above difference against the contribution of the same auction to the optimal welfare.
    For each auction $j$, consider the following $3$ cases:
    \begin{itemize}
        \item $\max_i (v_{i, j} - c_{i, j}) < 0$ and no bidder wins.
        In this case, the payment made and the cost incurred in $j$ are both $0$.
        On ther other hand, the contribution of $j$ to the optimal welfare is also $0$.
        So the contribution of $j$ to the difference is (trivially) at least $1/2$ of the contribution of $j$ to the optimal welfare.
        \item $\rw(j)$ wins in auction $j$, i.e., $i_1(j) = \rw(j)$.
        In this case, by the choice of $\alpha_j$, the payment that $\rw(j)$ makes is at least
        \[
            p_{\rw(j), j} \ge (1 + \alpha_j) \cdot c_{\rw(j), j} = c_{\rw(j), j} + \frac12 (v_{\rw(j), j} - c_{\rw(j), j}).
        \]
        The cost incurred in $j$ is $c_{\rw(j), j}$, so the contribution of $j$ to the difference is at least $\frac12 (v_{\rw(j), j} - c_{\rw(j), j})$.
        On the other hand, the contribution of $j$ to the optimal welfare is $v_{\rw(j), j} - c_{\rw(j), j}$, and the ratio between the contribution to the difference and that to the optimal welfare is at least $1/2$.
        \item A bidder other than the rightful winner wins.
        In this case, we know the second highest score is at least the score of the rightful winner $\rw(j)$.
        Since the rightful winner bids at least their value, and by the choice of $\alpha$, we have
        \begin{align*}
            b_{i_2(j), j} - (1 + \alpha_j) \cdot c_{i_2(j), j} & \ge b_{\rw(j), j} - (1 + \alpha_j) \cdot c_{\rw(j), j} \\
            & \ge v_{\rw(j), j} - (1 + \alpha_j) \cdot c_{\rw(j), j} \\
            & = \frac12 (v_{\rw(j), j} - c_{\rw(j), j}).
        \end{align*}
        Then the payment of the winner is at least
        \begin{align*}
            p_{i_1(j), j} & \ge b_{i_2(j), j} - (1 + \alpha_j) \cdot c_{i_2(j), j} + (1 + \alpha_j) \cdot c_{i_1(j), j} \\
            & \ge \frac12 (v_{\rw(j), j} - c_{\rw(j), j}) + (1 + \alpha_j) \cdot c_{i_1(j), j}.
        \end{align*}
        The cost incurred in $j$ is $c_{i_1(j), j}$, so the contribution of $j$ to the difference is at least
        \begin{align*}
            p_{i_1(j), j} - c_{i_1(j), j} & \ge \frac12 (v_{\rw(j), j} - c_{\rw(j), j}) + \alpha_j \\
            & \ge \frac12 (v_{\rw(j), j} - c_{\rw(j), j}).
        \end{align*}
        In other words, the contribution of $j$ to the difference is at least $1/2$ of the contribution of $j$ to the optimal welfare.
    \end{itemize}

    Summarizing the $3$ cases, we see that the contribution of each auction $j$ to the difference between the total payment and the total cost is at least $1/2$ of the contribution of the same auction to the optimal welfare.
    As a result, the difference between the total payment and the total cost is at least $1/2$ of the optimal welfare.
    By the ROI constraints of the bidders, the former quantity is a lower bound of the welfare in equilibrium, which gives a lower bound of $1/2$ on the PoA.
\end{proof}

\subsection{Bidder-Dependent Cost Multipliers}
\label{sec:bidder-dependent}

We now proceed to show the existence of bidder-dependent (and auction-independent) cost multipliers with a PoA of $1/4$.

\paragraph{The mechanism.}
For each bidder $i$, let $S_i$ be the set of auctions where $i$ is the rightful winner (as defined in the previous subsection), i.e., $S_i = \{j \mid \rw(j) = i\}$.
Note that by definition, for each $i$ and $j \in S_i$, $v_{i, j} - c_{i, j} \ge 0$.
For each bidder $i$, let $\alpha_i \ge 0$ be the unique number such that
\[
    \sum_{j \in S_i} v_{i, j} = (1 + 2 \alpha_i) \cdot \sum_{j \in S_i} c_{i, j}.
\]
In each auction $j$, first discard any bidder $i$ where $b_{i, j} - (1 + \alpha_i) \cdot c_{i, j} < 0$.
If all bidders are discarded, then no bidder wins in auction $j$.
Otherwise, let each bidder $i$'s score be the cost-adjusted bid $b_{i, j} - c_{i, j}$ and sort all remaining bidders by their scores.
The bidder among the remaining ones with the highest score, $i_1(j)$, wins, and pays the minimum bid to win.
That is, if $i_1(j)$ is the only remaining bidder, then $i_1(j)$ pays
\[
    p_{i_1(j), j} = (1 + \alpha_{i_1(j)}) \cdot c_{i_1(j), j}.
\]
Otherwise, let $i_2(j)$ be the bidder among the remaining ones with the second highest score.
$i_1(j)$ pays
\[
    p_{i_1(j), j} = \max\{(1 + \alpha_{i_1(j)}) \cdot c_{i_1(j), j}, b_{i_2(j), j} - c_{i_2(j), j} + c_{i_1(j), j}\}.
\]

Again, one can check the above mechanism is in fact quasilinear truthful.
We also note that this mechanism, unlike our anonymous mechanism, uses different quantities to prescreen and sort bidders.
In particular, the quantity used for prescreening is no larger than the score used for sorting.
This is necessary for the mechanism to provide the desired efficiency guarantee.

\paragraph{Analysis of the mechanism.}
The efficiency of the mechanism is captured by the following theorem.

\begin{theorem}
\label{thm:bidder-dependent}
    The mechanism with bidder-dependent cost multipliers has a PoA of $1/4$.
\end{theorem}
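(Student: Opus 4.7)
The plan is to extend the two-pronged welfare lower bound technique used for the auction-dependent mechanism, but now partition each rightful winner's auctions into three groups that reflect the prescreen-then-sort structure of this mechanism. Fix any equilibrium. For each bidder $i$, let $S_i = \{j : \rw(j) = i\}$, and split $S_i$ into the disjoint sets $W_i$ (auctions where $i$ actually wins in equilibrium), $L_i$ (auctions where $i$ passes the prescreening test $b_{i,j} \ge (1+\alpha_i) c_{i,j}$ but another bidder wins), and $D_i$ (auctions where $i$ is discarded by prescreening). The goal is to derive two separate welfare lower bounds that together cover $W_i \cup L_i$, and then use the defining identity for $\alpha_i$ to show that $D_i$ can only account for a bounded share of $\text{OPT}$.

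First I handle $W_i$ directly: in each $j \in W_i$ the rightful winner $i$ wins, so the welfare picks up $v_{i,j} - c_{i,j}$. Next I handle $L_i$ using payments. By quasilinear truthfulness $b_{i,j} \ge v_{i,j}$, so $i$'s sort score $b_{i,j} - c_{i,j}$ is at least $v_{i,j} - c_{i,j}$; since $i$ is not discarded but loses, the second-highest score among remaining bidders is at least $i$'s score. Plugging into the payment rule $p_{i_1(j),j} \ge (b_{i_2(j),j} - c_{i_2(j),j}) + c_{i_1(j),j}$ yields $p_{i_1(j),j} - c_{i_1(j),j} \ge v_{i,j} - c_{i,j}$. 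Because payments minus costs from the remaining auctions are nonnegative (the prescreen forces the winner's payment to be at least $(1+\alpha) c \ge c$), the bidders' ROI constraints then give $\text{Welfare} \ge \sum_i \sum_{j \in L_i}(v_{i,j} - c_{i,j})$.

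To bound the $D_i$ contribution to $\text{OPT}$, note that $j \in D_i$ means $v_{i,j} \le b_{i,j} < (1+\alpha_i) c_{i,j}$, hence $v_{i,j} - c_{i,j} < \alpha_i c_{i,j}$. Summing and using monotonicity of costs gives $\sum_{j \in D_i}(v_{i,j} - c_{i,j}) \le \alpha_i \sum_{j \in S_i} c_{i,j}$, which by the defining identity $\sum_{j \in S_i}(v_{i,j} - c_{i,j}) = (1 + 2\alpha_i) \sum_{j \in S_i} c_{i,j} - \sum_{j \in S_i} c_{i,j} = 2\alpha_i \sum_{j \in S_i} c_{i,j}$ is at most $\tfrac12 \sum_{j \in S_i}(v_{i,j} - c_{i,j})$. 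Summing over $i$ then shows that $\bigcup_i (W_i \cup L_i)$ accounts for at least $\text{OPT}/2$. Combining the two welfare lower bounds gives $2 \cdot \text{Welfare} \ge \sum_i \sum_{j \in W_i \cup L_i}(v_{i,j} - c_{i,j}) \ge \text{OPT}/2$, hence $\poa \ge 1/4$.

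The main obstacle is the $L_i$ case, which hinges on the mechanism's deliberately asymmetric design: the score used for ranking ($b - c$) is strictly larger than the quantity used for prescreening ($b - (1+\alpha)c$). Collapsing the two to a single expression would either inflate the required $\alpha_i$ (breaking the $D_i$ half-of-$\text{OPT}$ bound) or lose the ability to transfer the rightful winner's cost-adjusted value into a payment bound on whoever actually wins. Verifying that no degenerate corner (bidders with $c_{i,j} = 0$, an $\alpha_i = 0$, or an empty $S_i$) breaks any of these inequalities is a routine but necessary check that I would fold into the case analysis at the end.
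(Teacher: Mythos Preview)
Your payment-side bound (what you call the $L_i$ case) and your $D_i$ bound are both correct and essentially match the paper's Lemma~\ref{lem:T_i} and the second half of Lemma~\ref{lem:AB}. The gap is in the step you call ``easy'': the bound
\[
    \text{Welfare} \ \ge\ \sum_i \sum_{j \in W_i}(v_{i,j}-c_{i,j}).
\]
This would follow only if every auction where the actual winner is \emph{not} the rightful winner contributed nonnegatively to welfare. But that is not guaranteed. A bidder $i'$ with a large multiplier $\theta_{i'} > 1+\alpha_{i'}$ can pass the prescreen $b_{i',j} \ge (1+\alpha_{i'})c_{i',j}$ even when $v_{i',j} < c_{i',j}$, and then win with a strictly negative contribution $v_{i',j}-c_{i',j}$. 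The ROI constraint only tells you that each bidder's \emph{total} contribution across all auctions is nonnegative; it does not let you drop an arbitrary subset of that bidder's auctions (namely, the ones outside $\bigcup_i W_i$) from the welfare sum. So your first inequality is unjustified, and without it your two-pronged combination collapses.

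The paper closes exactly this hole, and its fix is the one structural idea your proposal is missing: it partitions bidders into \emph{conservative} ($\theta_i < \alpha_i$) and \emph{aggressive} ($\theta_i \ge \alpha_i$). For a conservative bidder $i$, passing the prescreen forces $\alpha_i v_{i,j} \ge \theta_i v_{i,j} = b_{i,j} \ge (1+\alpha_i)c_{i,j}$, hence $v_{i,j} \ge c_{i,j}$, so every auction a conservative bidder wins contributes nonnegatively; this is what makes the value-side bound ($A$ in the paper) safe, but only after you restrict it to conservative bidders. Aggressive bidders are handled entirely on the payment side: because $\theta_i$ is large, $i$ passes the prescreen in every $j\in S_i$, and the payment rule then gives $p_{i_1(j),j}-c_{i_1(j),j}\ge \alpha_i c_{i,j}$ when $i$ wins (and your $L_i$-type bound when $i$ loses), which by the defining identity for $\alpha_i$ sums to $\tfrac12\sum_{j\in S_i}(v_{i,j}-c_{i,j})$. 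Once you add this aggressive/conservative split, the rest of your outline goes through and reproduces the paper's argument.
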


The proof of the above theorem is fairly involved.
We first introduce some useful notions and properties.
For each bidder $i$, let $i$'s {\em core auctions} $T_i$ be the subset of $S_i$ where $i$'s value is at least $(1 + \alpha_i)$ times $i$'s user cost, i.e.,
\[
    T_i = \{j \in S_i \mid v_{i, j} \ge (1 + \alpha_i) \cdot c_{i, j}\}.
\]
We will use the following fact regarding $T_i$:
\begin{lemma}
\label{lem:T_i}
    For each bidder $i$,
    \[
        \sum_{j \in T_i} (v_{i, j} - c_{i, j}) \ge \frac12 \sum_{j \in S_i} (v_{i, j} - c_{i, j}).
    \]
\end{lemma}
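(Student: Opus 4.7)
My plan is to work entirely with aggregate quantities over $S_i$ and exploit the defining equation for $\alpha_i$ together with the defining inequality for $T_i$. Introduce the shorthand $V_S := \sum_{j \in S_i} v_{i,j}$, $C_S := \sum_{j \in S_i} c_{i,j}$, and analogously $V_T, C_T$ over $T_i$ and $V_R, C_R$ over $R := S_i \setminus T_i$. The defining identity $V_S = (1 + 2\alpha_i) C_S$ rewrites as $V_S - C_S = 2\alpha_i C_S$, so the desired conclusion is equivalent to $V_T - C_T \ge \alpha_i C_S$.

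The key observation is that the auctions in $R$ are exactly those in $S_i$ where $v_{i,j} < (1+\alpha_i) c_{i,j}$, so summing this strict inequality across $R$ yields $V_R \le (1+\alpha_i) C_R$, equivalently $V_R - C_R \le \alpha_i C_R$. Decomposing the target quantity across the partition $S_i = T_i \cup R$ and substituting these two bounds gives
\[
V_T - C_T \;=\; (V_S - C_S) - (V_R - C_R) \;\ge\; 2\alpha_i C_S - \alpha_i C_R \;=\; \alpha_i(2C_T + C_R) \;\ge\; \alpha_i(C_T + C_R) \;=\; \alpha_i C_S,
\]
where the final inequality uses $\alpha_i, C_T \ge 0$. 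This is precisely $V_T - C_T \ge \tfrac{1}{2}(V_S - C_S)$, which is the claim of the lemma.

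I do not expect a real obstacle here; the whole argument is essentially a one-line algebraic manipulation once one sees that the non-core auctions $R$ automatically carry slack $(1+\alpha_i)$ between value and cost, directly from the defining condition for $T_i$. The only minor subtlety is the corner case $C_S = 0$: if $V_S = 0$ as well, both sides of the claimed inequality vanish, while if $V_S > 0$ then $\alpha_i = \infty$ under the same convention used for the auction-dependent mechanism, and one checks that $T_i = S_i$ so the inequality is trivial. In all other cases the algebra above is well defined and concludes the proof.
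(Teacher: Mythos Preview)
Your proof is correct and follows essentially the same route as the paper: decompose the sum over $T_i$ as the sum over $S_i$ minus the sum over $R = S_i \setminus T_i$, use the defining identity $V_S - C_S = 2\alpha_i C_S$, bound $V_R - C_R \le \alpha_i C_R$ via the definition of $T_i$, and conclude $V_T - C_T \ge \alpha_i C_S$. The only cosmetic difference is that the paper bounds $2\alpha_i C_S - \alpha_i C_R \ge \alpha_i C_S$ directly from $C_R \le C_S$, whereas you rewrite $2C_S - C_R = 2C_T + C_R$ first; your extra remark on the $C_S = 0$ corner case is a harmless addition.
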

\begin{proof}
    For each $i$,
    \begin{align*}
        \sum_{j \in T_i} (v_{i, j} - c_{i, j}) & = \sum_{j \in S_i} (v_{i, j} - c_{i, j}) - \sum_{j \in S_i \setminus T_i} (v_{i, j} - c_{i, j}) \\
        & = 2 \alpha_i \cdot \sum_{j \in S_i} c_{i, j} - \sum_{j \in S_i \setminus T_i} (v_{i, j} - c_{i, j}) \tag{choice of $\alpha_i$} \\
        & \ge 2 \alpha_i \cdot \sum_{j \in S_i} c_{i, j} - \sum_{j \in S_i \setminus T_i} \alpha_i \cdot c_{i, j} \tag{definition of $T_i$} \\
        & \ge \alpha_i \cdot \sum_{j \in S_i} c_{i, j} \\
        & = \frac12 \sum_{j \in S_i} (v_{i, j} - c_{i, j}). \tag{choice of $\alpha_i$}
    \end{align*}
    This establishes the claim.
\end{proof}

Fix an equilibrium.
We partition all bidders into {\em aggressive} ones $B_\agg$ and {\em conservative} ones $B_\con$, based on how large or small a bidder's bid multiplier is.
We say a bidder $i$ is aggressive ($i \in B_\agg$), if $i$'s bid multiplier $\theta_i$ is at least $\alpha_i$; $i$ is conservative otherwise:
\[
    B_\agg = \{i \mid \theta_i \ge \alpha_i\}, \qquad B_\con = \{i \mid \theta_i < \alpha_i\}.
\]

The plan is to consider two quantities separately, which each lower bound the welfare in equilibrium.
The first quantity, say $A$, is the total value collected, minus the total cost imposed, by all conservative bidders in their respective core auctions.
Formally,
\[
    A = \sum_{i \in B_\con, j \in T_i} x_{i, j} \cdot (v_{i, j} - c_{i, j}).
\]
The second quantity, say $B$, is the total payment made by all bidders, minus the total cost imposed, in auctions (1) where an aggressive bidder is the rightful winner, or (2) which are among the core auctions of a conservative bidder and that bidder does not win.
Formally,
\[
    B = \sum_{i \in B_\agg, j \in S_i} \sum_{i'} (p_{i', j} - x_{i', j} \cdot c_{i', j}) + \sum_{i \in B_\con, j \in T_i} \sum_{i' \ne i} (p_{i', j} - x_{i', j} \cdot c_{i', j}).
\]
The following lemma states that each of the two quantities is a lower bound of the welfare in equilibrium.

\begin{lemma}
\label{lem:AB}
    \[
        \max\{A, B\} \le \sum_{i, j} x_{i, j} \cdot (v_{i, j} - c_{i, j}).
    \]
\end{lemma}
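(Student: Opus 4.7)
The plan is to bound $A$ and $B$ separately by the equilibrium welfare $W := \sum_{i,j} x_{i,j}(v_{i,j} - c_{i,j})$, using two structural facts about any equilibrium under this mechanism. (F1) Whenever bidder $i$ wins auction $j$, $i$ survived the prescreen, so $\theta_i v_{i,j} = b_{i,j} \ge (1+\alpha_i) c_{i,j}$; combined with the ROI constraint of $i$ and the payment-rule floor $p_{i,j} \ge (1+\alpha_i) c_{i,j}$ for every winner, this yields $\sum_j x_{i,j}(v_{i,j} - c_{i,j}) \ge \alpha_i \sum_j x_{i,j} c_{i,j} \ge 0$ for every bidder. (F2) That same price floor, together with losers paying $0$, gives $p_{i,j} - x_{i,j} c_{i,j} \ge 0$ for every pair $(i,j)$.

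The bound $B \le W$ follows directly from (F2) and a global ROI sum: every summand $p_{i',j} - x_{i',j} c_{i',j}$ in the definition of $B$ is non-negative, so dropping the index restrictions in $B$ only enlarges the sum, giving $B \le \sum_{i',j}(p_{i',j} - x_{i',j} c_{i',j})$; summing the ROI inequality over all bidders gives $\sum_{i',j} p_{i',j} \le \sum_{i',j} x_{i',j} v_{i',j}$, and subtracting $\sum_{i',j} x_{i',j} c_{i',j}$ from both sides concludes $B \le W$. For $A \le W$, I use the sharper pointwise fact that for a conservative $i \in B_\con$ with $\theta_i < \alpha_i$, the prescreen bound $\theta_i v_{i,j} \ge (1+\alpha_i) c_{i,j}$ strictly forces $v_{i,j} > c_{i,j}$ whenever $x_{i,j} = 1$ (with the $c_{i,j}=0$ corner handled trivially). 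Thus every term in $\sum_j x_{i,j}(v_{i,j}-c_{i,j})$ is non-negative for conservative $i$, so restricting the inner sum to $j \in T_i$ can only decrease it: $A \le \sum_{i \in B_\con}\sum_j x_{i,j}(v_{i,j}-c_{i,j})$. By (F1) the aggressive bidders' contributions to $W$ are also non-negative, so this partial sum is in turn bounded by $W$.

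The main subtlety I expect is the asymmetry between the two bounds: $A$ is controlled by a pointwise value-vs-cost comparison that is only available for conservative bidders (where the gap $\theta_i < \alpha_i$ is what drives $v_{i,j} > c_{i,j}$), while $B$ is controlled by a global ROI argument that relies on the designed price floor $(1+\alpha_{i'})c_{i',j}$ to render payment-minus-cost non-negative everywhere. The calculations themselves are short, but they use the cost multipliers $\alpha_i$ in two genuinely different ways, and one must be careful to match each step with the right class $B_\agg$ or $B_\con$ and to verify that the excluded indices in $B$ (the rightful-winner terms when $i \in B_\con$) do not interfere with the bound, which they do not precisely because all omitted terms are non-negative.
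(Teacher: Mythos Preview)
Your proposal is correct and follows essentially the same approach as the paper: both bound $A$ by using the payment floor plus ROI to show each bidder's net contribution is nonnegative, then the prescreen inequality $\theta_i v_{i,j}\ge(1+\alpha_i)c_{i,j}$ together with $\theta_i<\alpha_i$ to show conservative winners satisfy $v_{i,j}\ge c_{i,j}$ pointwise; and both bound $B$ by observing $p_{i',j}-x_{i',j}c_{i',j}\ge 0$ termwise and then summing ROI over all bidders. The only cosmetic difference is that you run the inequalities from $A,B$ up to $W$ while the paper runs them from $W$ down to $A,B$.
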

\begin{proof}
    Consider $A$ first.
    We first decompose the welfare into the contirbution of each bidder.
    \[
        \sum_{i, j} x_{i, j} \cdot (v_{i, j} - c_{i, j}) = \sum_i \sum_{j: x_{i, j} = 1} (v_{i, j} - c_{i, j}).
    \]
    Observe that each bidder $i$'s contribution is nonnegative.
    This is because for each $i$,
    \begin{align*}
        \sum_{j: x_{i, j} = 1} (v_{i, j} - c_{i, j}) & \ge \sum_{j: x_{i, j} = 1} (p_{i, j} - c_{i, j}) \tag{$i$'s ROI constraint} \\
        & \ge \sum_{j: x_{i, j} = 1} ((1 + \alpha_i) \cdot c_{i, j} - c_{i, j}) \tag{payment rule} \\
        & = \alpha_i \cdot \sum_{j: x_{i, j} = 1} c_{i, j} \ge 0. \tag{$\alpha_i \ge 0$}
    \end{align*}
    So we have
    \[
        \sum_{i, j} x_{i, j} \cdot (v_{i, j} - c_{i, j}) \ge \sum_{i \in B_\con} \sum_{j: x_{i, j} = 1} (v_{i, j} - c_{i, j}).
    \]

    For each conservative bidder $i \in B_\con$, further observe that the contribution of each auction $j$ to the right hand side of the above inequality is nonnegative.
    This is because in order for $i$ to win in an auction $j$, $i$'s bid must at least satisfy
    \[
        b_{i, j} \ge (1 + \alpha_i) \cdot c_{i, j}.
    \]
    On the other hand, since $i$ is conservative, we know
    \[
        b_{i, j} = \theta_i \cdot v_{i, j} \le \alpha_i \cdot v_{i, j}.
    \]
    So in any auction $j$, $x_{i, j} = 1$ only if $v_{i, j} \ge c_{i, j}$.
    Given this, we can further relax the above inequality and get
    \[
        \sum_{i, j} x_{i, j} \cdot (v_{i, j} - c_{i, j}) \ge \sum_{i \in B_\con} \sum_{j \in T_i: x_{i, j} = 1} (v_{i, j} - c_{i, j}) = A.
    \]

    Now consider $B$.
    We have
    \begin{align*}
        & \,\phantom{=}\ \sum_{i, j} x_{i, j} \cdot (v_{i, j} - c_{i, j}) \\
        & = \sum_i \sum_{j: x_{i, j} = 1} (v_{i, j} - c_{i, j}) \\
        & \ge \sum_i \sum_{j: x_{i, j} = 1} (p_{i, j} - c_{i, j}) \tag{each $i$'s ROI constraint} \\
        & \ge \sum_{i \in B_\agg, j \in S_i} \sum_{i'} (p_{i', j} - x_{i', j} \cdot c_{i', j}) \\
        & \,\phantom{=}\ + \sum_{i \in B_\con, j \in T_i} \sum_{i' \ne i} (p_{i', j} - x_{i', j} \cdot c_{i', j}) \\
        & = B,
    \end{align*}
    where the last inequality is because by the payment rule, for each $i$ and $j$,
    \[
        p_{i, j} - x_{i, j} \cdot c_{i, j} \ge x_{i, j} \cdot ((1 + \alpha_i) \cdot c_{i, j} - c_{i, j}) \ge 0.
    \]
    So discarding the contribution of some auctions and some bidders cannot make the sum larger.
    This finishes the proof.
\end{proof}

Now we are ready to prove Theorem~\ref{thm:bidder-dependent}.

\begin{proof}[Proof of Theorem~\ref{thm:bidder-dependent}]
    Given Lemma~\ref{lem:AB}, the welfare in equilibrium can be bounded in the following way:
    \[
        \sum_{i, j} x_{i, j} \cdot (v_{i, j} - c_{i, j}) \ge \max\{A, B\} \ge \frac12 (A + B).
    \]
    So we only need to compare $A + B$ against the optimal welfare.
    Below we argue the former is at least $1/2$ of the latter, which gives a lower bound of $1/4$ on the PoA.

    First observe that the optimal welfare can be written as
    \[
        \sum_j \max\{\max_i (v_{i, j} - c_{i, j}), 0\} = \sum_i \sum_{j \in S_i} (v_{i, j} - c_{i, j}).
    \]
    Below we consider each bidder $i$ separately, and show that the contribution of auctions in $S_i$ to $A + B$ is at least half of the contribution of the same auctions to the optimal welfare, i.e., $\sum_{j \in S_i} (v_{i, j} - c_{i, j})$.
    First consider each $i \in B_\agg$.
    In this case, each $j \in S_i$ has a winner.
    This is because the bid $b_{i, j}$ of $i$ in $j$ satisfies $b_{i, j} = \theta_i \cdot v_{i, j} \ge (1 + \alpha_i) \cdot v_{i, j} \ge (1 + \alpha_i) \cdot c_{i, j}$, so at least one bidder (i.e., $i$) passes the prescreening stage of the mechanism.
    Also, observe that each $j \in B_\agg$ contributes nothing to $A$.
    So we only need to argue that the total contribution of all auctions in $S_i$ to $B$ is at least $1/2$ of the contribution of the same auctions to the optimal welfare.
    In fact, if $i$ wins in auction $j$, then the payment $i$ makes in $j$ is at least $(1 + \alpha_i) \cdot c_{i, j}$, so the contribution to $B$ is at least $\alpha_i \cdot c_{i, j}$.
    Summing over $j \in S_i$ and by the choice of $\alpha_i$, the total contribution to $B$ (and therefore to $A + B$) is at least
    \[
        \sum_{j \in S_i} \alpha_i \cdot c_{i, j} = \frac12 \sum_{j \in S_i} (v_{i, j} - c_{i, j}),
    \]
    which is $1/2$ of the contribution to the optimal welfare.

    Now consider each $i \in B_\con$.
    For such a bidder $i$, only auctions in $T_i$ may contribute to $A + B$.
    Observe that each $j \in T_i$ has a winner.
    This is becasue the bid $b_{i, j}$ is at least $v_{i, j}$, which by the choice of $T_i$ satisfies $v_{i, j} \ge (1 + \alpha_i) \cdot c_{i, j}$.
    So $b_{i, j} \ge (1 + \alpha_i) \cdot c_{i, j}$, and at least one bidder (i.e., $i$) passes the prescreening stage of the mechanism.
    Below we argue that the contribution of each $j \in T_i$ to $A + B$ is at least the contribution of the same auction to the optimal welfare.
    If $i$ wins in $j$, then $j$ contributes to $A$, and the contribution is precisely $v_{i, j} - c_{i, j}$, which is the same as the contribution of $j$ to the optimal welfare.
    If $i$ does not win in $j$, then the score of $i_2(j)$ is at least the score of $i$, i.e.,
    \[
        b_{i_2(j), j} - c_{i_2(j), j} \ge b_{i, j} - c_{i, j} \ge v_{i, j} - c_{i, j}.
    \]
    This means the payment made by the winner, $i_1(j)$ must satisfy
    \[
        p_{i_1(j), j} \ge b_{i_2(j), j} - c_{i_2(j), j} + c_{i_1(j), j} \ge v_{i, j} - c_{i, j} + c_{i_1(j), j}.
    \]
    So the contribution of $j$ to $B$ is at least $v_{i, j} - c_{i, j}$, which is the contribution of $j$ to the optimal welfare.
    Summing over $j \in T_i$, we see that the total contribution is at least
    \[
        \sum_{j \in T_i} (v_{i, j} - c_{i, j}),
    \]
    which by Lemma~\ref{lem:T_i} is at least
    \[
        \frac12 \sum_{j \in S_i} (v_{i, j} - c_{i, j}),
    \]
    which is the contribution of auctions in $S_i$ to the optimal welfare.

    Now we can conclude the proof by observing
    \begin{align*}
        \sum_{i, j} x_{i, j} \cdot (v_{i, j} - c_{i, j}) & \ge \max\{A, B\} \tag{Lemma~\ref{lem:AB}} \\
        & \ge \frac12 (A + B) \\
        & \ge \frac14 \sum_i \sum_{j \in S_i} (v_{i, j} - c_{i, j}) \\
        & = \frac14 \sum_j \max\{\max_i (v_{i, j} - c_{i, j}), 0\}.
    \end{align*}
    This gives a lower bound of $1/4$ on the PoA of the mechanism.
\end{proof}

\section{Conclusion}

In this paper, we study the impact of user costs in autobidding auctions. We show that running vanilla VCG may result in poor performance. To improve the performance of VCG, we propose new variants with cost multipliers and provably show that they can guarantee constant approximations in terms of social welfare. 
For future works, as our paper examines one possibility to improve VCG by cost multipliers, it is interesting to explore other variants of VCG that may improve the performance. It is also very interesting to consider bridging the gap between theory and practice by designing practical mechanisms inspired by our theoretical results.


\bibliographystyle{apalike}
\bibliography{ref}

\appendix

\end{document}